%
%
%
%
%
%
\RequirePackage{fix-cm}
\documentclass{svjour3}                    
\smartqed  
\usepackage{graphicx}
%
%
%
\usepackage{bm,color,bbm}
\usepackage{mathtools}
\usepackage{subfigure}
\usepackage{centernot}
\usepackage{hyperref}




\newcommand{\beq}{\begin{equation}}
\newcommand{\eeq}{\end{equation}}
\newcommand{\bqa}{\begin{eqnarray}}
\newcommand{\eqa}{\end{eqnarray}}
\newcommand{\nn}{\nonumber}

\newcommand{\smallfrac}[2]{\mbox{$\frac{#1}{#2}$}}

\newcommand{\half}{\smallfrac{1}{2}}

\definecolor{maroon}{rgb}{0.7,0,0}

\definecolor{ngreen}{rgb}{0.3,0.7,0.3}

\definecolor{golden}{rgb}{0.8,0.6,0.1}

\journalname{Foundations}
\begin{document}

\title{Bell vs Bell: a ding--dong battle over quantum incompleteness
}

\author{Michael J. W. Hall        
}


\institute{ Department of Theoretical Physics \\Research School of Physics \\Australian National University \\Canberra ACT 0200, Australia \\
}

\date{}

\maketitle

\begin{abstract}
Does determinism (or even the incompleteness of quantum mechanics) follow from locality and perfect correlations?  In a 1964 paper  John Bell gave the first demonstration that quantum mechanics is incompatible with local hidden variables. Since then a vigorous debate has rung out over whether he relied on an assumption of determinism or instead, as he later claimed in a 1981 paper, derived determinism from assumptions of locality and perfect correlation. This paper aims to bring clarity to the debate via simple examples and rigorous results. It is first recalled, via quantum and classical counterexamples, that the weakest statistical form of locality consistent with Bell's 1964 paper (parameter independence) is insufficient for the derivation of determinism.
Attention is then turned to critically assess Bell's appealing to the Einstein-Rosen-Podolsky (EPR) incompleteness argument to support his claim. It is shown this argument is itself incomplete, via counterexamples that expose two logical gaps. Closing these gaps via a strong ``counterfactual'' reality criterion enables a rigorous derivation of both determinism and parameter independence, and in this sense justifies Bell's claim.   Conversely, however, it is noted that whereas the EPR argument requires a weaker ``measurement choice'' assumption than Bell's demonstration, it nevertheless leads to a similar incompatibility with quantum predictions rather than to quantum incompleteness.

\end{abstract}

\newpage

\section{Introduction}
\label{sec:intro}

\begin{quote}
	``{\it I have made an effort to present the deduction \dots 
		shorn of all superfluous mathematical technicalities and woolly interpretative commentary. (A reader as yet unfamiliar with the literature
		will be astounded to see the incredible metaphysical extravaganzas to
		which this subject has led.)}'' --- van Fraassen~\cite{fraassen}
\end{quote}

Bell's theorem, that there are no local hidden variable models that can reproduce all quantum correlations~\cite{bell1964,bellreview}, is one of the most surprising results of twentieth century physics. It has important ramifications for both physics and metaphysics, not only underlying the security of a number of device-independent communication protocols such as quantum key distribution and random number generation, but also ruling out naive classical interpretations of quantum probability~\cite{bellreview}.  

The mathematics required to prove Bell's theorem is remarkably straightforward.
However, there is one aspect of the assumptions used in Bell's first exposition that has generated differing opinions and vigorous debate~\cite{fraassen,wigner}--\cite{caval}:  do the assumptions of locality and perfect correlation made in his 1964 paper~\cite{bell1964} necessarily imply that certain measurement outcomes are predetermined, or must such determinism be postulated?

It may be noted that the answer to this question is unimportant in one sense: later generalisations of Bell's theorem do not rely on assuming perfect correlations~\cite{bellreview}   (with the further advantage that, unlike the 1964 result, they are experimentally testable~\cite{bell1981}; see also Section~\ref{sec:free}).    Nevertheless, the question remains of strong interest for several reasons.  First, there is no general consensus on the answer: while early commenters such as Wigner wrote that Bell postulated both deterministic hidden variables and locality~\cite{wigner} (see also~\cite{suppes,dem}), Bell himself later claimed that determinism was in fact inferred rather than assumed in the 1964 paper~\cite{bell1981}, and the debate has only intensified since then~\cite{fraassen,wutterich1}--\cite{caval}. Second, the validity of Bell's claim would suggest that there is no choice between giving up determinism or giving up locality in interpreting quantum phenomena: in the presence of perfect correlations it would be compulsory to give up locality~\cite{norsen2006,tumulka,shim83}---a conclusion that adds further fuel to the fire and is strongly contested in its own right~\cite{many}--\cite{griff}.  Finally, noting that Bell appeals to the famous Einstein-Podolsky-Rosen (EPR) argument for the incompleteness of quantum mechanics~\cite{epr} to support his claim, the debate puts the latter argument itself into question.

With the aim of making a clear and concise contribution to these issues, this paper is guided by the quote given at the beginning (taken from an excellent early discussion on the subject by van Fraassen~\cite{fraassen}), which continues to resonate decades later.
In particular, attention will be focused on what can be proved in a simple yet rigorous and general manner, and what can be disproved via simple counterexamples, whilst avoiding woolly assertions and metaphysical extravaganzas!

In Section~\ref{sec:loc} it is recalled that the weakest statisical sense of locality consistent with Bell's 1964 paper, parameter independence, is too weak for deriving determinism for perfectly correlated measurement outcomes (Proposition~\ref{prop1}). 
This still leaves open the possibility of an alternative route for inferring determinism, based on Bell's appeal to the EPR incompleteness argument to support his claim. However,  two gaps in the EPR logic are identified  via simple counterexamples in Section~\ref{sec:epr}, i.e., the argument itself  is incomplete, so that determinism and quantum incompleteness do not strictly follow (Propositions~\ref{prop2} and~\ref{prop3}). Fortunately, as shown in Section~\ref{sec:strong}, a suitable strengthening of the EPR reality criterion to a ``counterfactual reality'' criterion, combined with a ``accessible choice'' condition (weaker than the ``free choice'' condition required for Bell inequalities), closes these gaps and leads to rigorous derivations not only of determinism but also, in a closing of the circle, of parameter independence (Propositions~\ref{prop4} and \ref{prop5}).
Finally, it is noted in Section~\ref{sec:free} that the EPR argument, even in strengthened form, leads to contradictions with quantum predictions rather than to quantum incompleteness (Proposition~\ref{prop6}), as well as to a related class of experimentally testable inequalities (Proposition~\ref{prop7}).  
Conclusions are given in Section~\ref{con}.

\section{Statistical locality, perfect correlation and determinism}
\label{sec:loc}

\subsection{Statistical formulation of locality: parameter independence}
\label{sec:pi}

Bell spells out his intended sense of locality in several parts of his 1964 paper~\cite{bell1964}:
\begin{quote}
	``{It is the requirement of locality, or more precisely that the result of a measurement on one system be unaffected by operations on a distant system, with which it has interacted in the past \dots}''
\end{quote}
\begin{quote}
	``{Now we make the hypothesis, and it seems one at least worth considering, that if the two measurements are made at places remote from on another, the orientation of one magnet does not influence the result obtained with the other.}''
\end{quote}
\begin{quote}
	``{In a theory in which parameters are added to quantum mechanics to determine the results of individual measurements, without changing the statistical predictions, there must be a mechanism whereby the setting of one device can influence the reading of another instrument, however remote.}''
\end{quote}
These quotes strongly suggest a deterministic element is involved (since results must pre--exist before they can be ``affected'' or ``influenced'').  However, Bell later claimed that this element is a direct consequence of locality and perfect correlations~\cite{bell1981} (giving rise to the debate mentioned in Section~\ref{sec:intro}), and this claim will be examined in detail in this paper. For now we need only note a purely logical point: if determinism is able to be derived rather than assumed, then it must be possible to mathematically formulate Bell's sense of locality  {\it without} reference to determinism. Thus a statistical formulation of locality is required.

To obtain such a formulation, consistent with the above quotes, let $x$ and $y$ label possible measurements which may be made in two separate regions of spacetime, with respective outcomes labeled by $a$ and $b$ and joint probability density $p(a,b|x,y)$  for these outcomes. If $\lambda$ denotes any additional parameters of interest---arising, e.g., from a physical or mathematical model---then it follows from the basic rules of probability that
\beq \label{bayes}
p(a,b|x,y) = \sum_\lambda p(a,b,\lambda|x,y) = \sum_\lambda p(a,b|x,y,\lambda) \, p(\lambda|x,y).
\eeq
Here  summation is replaced by integration over any continuous range of $\lambda$. A natural definition of statistical locality for any such model can now be rigorously formulated.
\begin{definition}[Parameter independence] \label{def1}
	The probability distribution of the result of a measurement in one region is unaffected by measurement operations in a distant region, i.e.,
	\beq \label{weak}
	p(a|x,y,\lambda) = p(a|x,\lambda),\qquad p(b|x,y,\lambda) = p(b|y,\lambda),
	\eeq
	for all $x,y,\lambda$. 
\end{definition}
In particular, the first equality in Eq.~(\ref{weak}) implies that the setting $y$ of one measurement device cannot affect or influence the underlying statistics of the reading $a$ of a remote device, and vice versa for the second inequality.  This definition appears to be the weakest possible formulation of statistical locality that is consistent with the above quotes from Bell's 1964 paper (see also Proposition~\ref{prop5} below). It is, moreover, consistent with a supporting quote from Einstein given by Bell in a footnote to his 1964 paper~\cite{bell1964}:
\begin{quote}
	``{But on one supposition we should, in my opinion, absolutely hold fast: the real
		factual situation of system S2 is independent of what is done with the system
		S1, which is spatially separated from the former.}''
\end{quote}
The formulation in Definition~\ref{weak} has also been called  ``hidden locality''~\cite{fraassen} and ``locality''~\cite{jarrett}, with the now common term ``parameter independence'' being introduced by  Shimony~\cite{shimony}.

\subsection{Determinism does not follow from parameter independence and perfect correlations}
\label{sec:tooweak}

In 1981 Bell made the following claim about his 1964 paper~\cite{bell1981}:
\begin{quote}
	``{My own first paper on this subject \dots starts with a summary of the EPR argument {\it from locality  to} deterministic hidden variables. But the commentators have almost universally reported that it begins with deterministic hidden variables.}''
\end{quote}
\begin{quote}
	``{It was only in the context of perfect correlation (or anticorrelation) that
		determinism could be inferred [\dots] (for any indeterminism would have spoiled
		the correlation).}''
\end{quote}
Thus, it is claimed that determinism is inferred  as a logical consequence of locality and perfect correlation.   Note that ``determinism' here (and throughout this paper) refers to ``outcome determinism'', i.e., to {\it the prediction of the outcome of a measurement with certainty}, as is standard in the Bell inequality literature~\cite{bellreview}.    The ensuing debate in the literature arises in large part from the simple observation that the form of locality in Definition~\ref{def1} is inconsistent with such an inference, as may be demonstrated by simple counterexamples.
\begin{proposition} \label{prop1} If locality is formulated as parameter independence, then determinism does not follow from locality and perfect correlations, i.e., 
	\beq \label{eqprop1}
	{\rm parameter~independence~+~perfect~correlations} \centernot\implies {\rm determinism}.
	\eeq
\end{proposition}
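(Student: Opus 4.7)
The plan is to exhibit two explicit counterexamples in which parameter independence and perfect correlations both hold, yet no outcome is predicted with certainty.

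First I would consider the trivial model in which the hidden parameter $\lambda$ takes a single value, so that $p(a,b|x,y,\lambda) = p(a,b|x,y)$. In this case parameter independence (Definition~\ref{def1}) collapses to the no-signalling conditions $p(a|x,y)=p(a|x)$ and $p(b|x,y)=p(b|y)$, and any no-signalling joint distribution with perfect correlations but nondegenerate marginals will serve. The minimal instance is
\[
p(a,b|x,y) = \tfrac{1}{2}\,\delta_{a,b}, \qquad a,b\in\{0,1\},
\]
for every $x,y$: the outcomes are perfectly correlated, parameter independence holds trivially, yet the marginals are $p(a|x)=p(b|y)=1/2$, so neither outcome is determined by $\lambda$ together with the local setting.

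Second, I would point out the physically natural quantum counterpart as an independent check: two spin-$\tfrac{1}{2}$ particles in the singlet state yield perfect anti-correlation $a=-b$ for aligned axes $x=y$; no-signalling, which is a theorem of quantum mechanics, guarantees parameter independence for the trivial $\lambda$; yet each single-wing marginal equals $1/2$, so determinism again fails.

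The only point requiring care, and the place where one must be explicit, is to confirm that these examples genuinely qualify as ``models'' in the sense of Eq.~(\ref{bayes}), i.e.\ that the (possibly trivial) parameter set attached to $\lambda$ produces a legitimate factorisation of $p(a,b|x,y)$; this is immediate once $\lambda$ is allowed to be constant. Beyond that there is no real technical obstacle, since Definition~\ref{def1} imposes no constraint on the sharpness of $p(a|x,\lambda)$: the heuristic ``any indeterminism would spoil the correlation'' has no force here, because the correlation is carried by the joint distribution rather than by local determinism on either wing. The moral for the sequel is that any attempted inference from locality plus perfect correlations to determinism must rest on assumptions strictly stronger than parameter independence, motivating the reality criterion scrutinised in later sections.
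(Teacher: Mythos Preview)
Your proof is correct: both of your counterexamples satisfy parameter independence and perfect correlation while having uniformly random marginals, so determinism fails, and that is all the proposition requires. The logical structure---disproof by explicit counterexample---is the same as the paper's.

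The paper's own proof differs only in the choice of counterexamples. Rather than the trivial-$\lambda$ bit correlation and the singlet, it uses (i) an EPR-type continuous-variable model with $p(q_1,q_2|Q_1,Q_2,\lambda)=\mu(q_1|\lambda)\delta(q_1+q_2)$ and $p(p_1,p_2|P_1,P_2,\lambda)=\nu(p_1|\lambda)\delta(p_1-p_2)$, and (ii) a PR-box, $p(a,b|x_j,y_k,\lambda)=\tfrac{1}{2}\delta_{ab,(-1)^{jk}}$ for all $\lambda$. Your examples are the more elementary and make the point with minimal machinery; the paper's are selected for thematic continuity---the EPR example anticipates the discussion of the EPR argument in Section~\ref{sec:epr}, and the PR-box exhibits a genuinely nonlocal correlation rather than a classically simulable one. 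The singlet counterexample you give is in fact acknowledged by the paper (attributed to Wiseman) in the paragraph immediately following its proof. Your explicit remark that a constant $\lambda$ still constitutes a legitimate model in the sense of Eq.~(\ref{bayes}) is a useful clarification that the paper leaves implicit.
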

\begin{proof} 
	A first counterexample of interest is a version of the position and momentum correlations considered in the EPR incompleteness argument~\cite{epr}.   In particular, suppose that $x_0\equiv Q_1$ and $x_1\equiv P_1$ correspond to measurements of conjugate position and momentum components $q_1$ and $p_1$ of a first particle, and $y_0\equiv Q_2$ and $y_1\equiv P_2$ to measurement of the corresponding components $q_2$ and $p_2$ of a second particle (where the particles may be classical or quantum), with corresponding measurement probability densities
	\beq \label{pqex1}
	p(q_1,q_2|Q_1,Q_2,\lambda) = \mu(q_1|\lambda)\delta(q_1+q_2),\qquad p(p_1,p_2|P_1,P_2,\lambda)=\nu(p_1|\lambda)\delta(p_1-p_2),
	\eeq
	\beq \label{pqex2}
	p(q_1,p_2|Q_1,P_2,\lambda) = \mu(q_1|\lambda)\nu(p_2|\lambda),\qquad p(p_1,q_2|P_1,P_2,\lambda)=\nu(p_1|\lambda)\mu(-q_2|\lambda),
	\eeq
	for some single-particle position and momentum probability densities  $\mu(q|\lambda)$ and $\nu(p|\lambda)$  conditioned on some underlying parameter $\lambda$.   Thus the positions and momenta are perfectly correlated (the sum of the position components is zero, as is the difference of the momentum components); parameter independence conditions~(\ref{weak}) are easily checked to hold; and yet the outcomes are not deterministic whenever $\mu$ and $\nu$ are not delta-functions. 
	
	A second noteworthy counterexample is  based on the perfect correlations of ``PR-boxes''~\cite{bellreview,rastall}. In particular, suppose that there are two possible measurements $x_0, x_1$ and $y_0,y_1$ in each region, each having outcomes $a,b=\pm1$, with $p(a,b|x_j,y_k,\lambda)=\half\delta_{ab,(-1)^{jk}}$ for all values of some underlying parameter $\lambda$.  Thus the outcomes are perfectly correlated for all choices of settings, with $b=(-1)^{jk}a$, and it is easily checked that parameter independence holds, with $p(a|x_j,y_k,\lambda)=\half=p(b|x_j,y_k,\lambda)$ in Eq~(\ref{weak}), despite the outcomes not being deterministic. 
\end{proof}

It is worth noting that both counterexamples in the proof of Proposition~\ref{prop1} (as well as the counterexample of standard quantum mechanics given by Wiseman~\cite{wiseman}) have underlying models (with additional underlying parameters) that satisfy both parameter independence and determinism. This is easily seen for the first counterexample, which is compatible with deterministic outcomes corresponding to members of a joint phase space ensemble with density $p(q_1,q_2,p_1,p_2|\lambda)=\mu(q_1|\lambda)\nu(p_1|\lambda)\delta(q_1+q_2)\delta(p_1-p_2)$ for each $\lambda$. More generally, local deterministic models exist for {\it any} set of correlations $\{p(a,b|x,y)\}$, including for the second counterexample in the proof, provided that the dependence of $p(\lambda|x,y)$ in Eq.~(\ref{bayes}) on $x$ and $y$ is sufficiently nontrivial~\cite{hallbrans}. 

However, Bell's claim is that determinism, not just the mere possibility of determinism, can be inferred from perfect correlations and locality. Proposition~\ref{prop1} immediately implies that this claim fails {\it if} parameter independence is used as a proxy for locality. Hence, while it appears to be the weakest form of statistical locality consistent with his 1964 paper, it is too weak to play the role of locality in his claim.

Fortunately, Bell's claim as quoted above suggests that Bell's intended form of locality is indeed stronger, and is directly related to the EPR incompleteness argument. Hence, if the claim can be shown to rely on a suitable formulation of locality via this argument, then it is logically valid. And this can indeed be shown --- but only if the EPR argument is suitably strengthened to close two gaps. This is the subject of the next two sections.

\section{Can the EPR argument be considered complete?}
\label{sec:epr}

As noted above, Bell's 1981 claim, to have inferred determinism from locality and perfect correlations in his 1964 paper, is based on a reference to the EPR incompleteness argument (see the quote at the beginning of Section~\ref{sec:tooweak}).   And indeed, in the 1964 paper he gives a very brief summary of this argument, in the context of two perfectly correlated spins $\bm \sigma_1$ and $\bm \sigma_2$, to conclude that~\cite{bell1964}
\begin{quote}
	{``Since we can predict in advance the result of measuring any chosen component of $\bm\sigma_2$, by previously measuring the same {\rm [perfectly correlated]} component of $\bm \sigma_1$, it follows that the result of any such measurement must be predetermined.}"
\end{quote}
Hence his claim is perfectly justified, providing that the EPR argument can in fact be used to derive determinism in this way.  But is this the case?

In this regard that Bell appears to have (e.g., in contrast to Niels Bohr~\cite{bohr}) fully accepted the EPR argument, and so does not spell out any details of how determinism rigorously follows from it. He is content for example, when revisiting and motivating the example of perfect spin correlations in Section~3 of his 1981 paper~\cite{bell1981}, to simply state that ``{if we do not accept the intervention on one side as a causal influence on the other, \it we seem obliged to admit} that the results on both sides are determined in advance'' (my italics).  Likewise, in earlier papers he similarly argues that  ``{\it This strongly suggests} that the outcomes of such measurements, along arbitrary directions, are actually determined in advance''~\cite{bellintro}; or asks  ``{\it Is it not more reasonable to assume} that the result was somehow predetermined all along?''~\cite{bellepr} (my italics). 

Hence, also noting the discussion in Section~\ref{sec:loc} and given the long--running debate mentioned in the Introduction, there is a need for a critical assessment of the EPR argument, to see whether it can indeed be used to give a rigorous derivation of determinism.  Such an assessment is the subject of this Section. It is found that there are in fact two logical gaps in the argument. These gaps will be closed in Section~\ref{sec:strong},  in part via a nontrivial strengthening of the EPR reality criterion, allowing both determinism and parameter dependence to be derived. This supports the essence of Bell's claim, as well as showing the consistency of parameter independence with the concept  of locality used in Bell's 1964 paper.

\subsection{Logic of the EPR argument}
\label{sec:logic}

The derivation of determinism via perfect correlation in the EPR paper begins with a simple sufficient condition for an element of reality, quoted here~\cite{epr}:
\begin{definition}[EPR reality criterion]  \label{epr}
	If, without in any way disturbing a system, we  can predict with certainty (i.e., with probability equal to unity) the value of a physical quantity, then there exists an element of physical reality corresponding to this physical quantity.
\end{definition}
This condition is then applied to derive determinism from  perfect correlations  (and in particular from perfect position and momentum  correlations), using the following logic:
\begin{itemize}
	\item[1.] Assume measurement $x_0$ is made in a first region, with result $a$ (assumption).
	\item[2.] The outcome of a measurement $y_0$ in a distant second region can then be predicted as $b=f(a)$ with certainty, for some 1:1 function $f$ (perfect correlation).
	\item[3.] This prediction can be obtained without disturbing the distant region in any way (assumption).
	\item[4.] Hence, the value of $b$ is an element of physical reality, prior to any actual measurement of $y_0$ in the distant region (EPR reality criterion).
\end{itemize}
The above steps are all that EPR explicitly use to derive a deterministic value, i.e., a predetermined real value of $b$, for a given perfect correlation. However, when one tries to make the above steps mathematically rigorous, an (easily fixed) ``asymmetry'' gap shows up in the above logic, as shown via a counterexample in Section~\ref{symm} below. Hence the EPR reality criterion is  insufficient for concluding the reality of $b$ via Steps 1--4.

Moreover, EPR then go on to consider what can be said if there are two or more perfectly correlated pairs of measurements, and add the following additional step:
\begin{itemize}
	\item[5.] If $x_1$ and $y_1$ are a second pair of perfectly correlated measurements, for the first and second regions respectively, then applying the same steps as above implies that the outcomes of both $y_0$ and $y_1$ are real and predetermined.
\end{itemize}
EPR rely on this last step to further conclude that quantum mechanics is   {\it incomplete}, in the sense that particular elements of reality, corresponding to the values of these predetermined outcomes, are not represented within quantum theory~\cite{epr}. In particular, they give an example where   the positions and momenta of each of two quantum particles are perfectly correlated, and yet have no deterministic values assigned to them   (see also the related example in Eqs.~(\ref{pqex1}) and~(\ref{pqex2}).    Bell similarly relies on Step~5 for the derivation of his 1964 inequality~\cite{bell1964} (since this inequality requires outcome determinism for multiple pairs of perfectly correlated measurements). However, as is shown via several counterexamples in Section~\ref{joint} below, Step~5 does not strictly follow as a logical consequence of Steps~1--4: there is a significant ``joint measurement'' gap. 

Two propositions follow immediately from these gaps and counterexamples:
\begin{proposition}\label{prop2} The logic of the EPR argument for the incompleteness of quantum mechanics is itself incomplete, i.e.,
	\beq \label{eqprop2}
	{\rm EPR~reality~criterion} \centernot \implies {\rm  incompleteness~of~QM}  .
	\eeq
\end{proposition}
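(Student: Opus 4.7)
The plan is to deduce Proposition~\ref{prop2} from the identification of the two gaps flagged in Section~\ref{sec:logic}: the asymmetry gap in Steps 1--4 and the joint measurement gap in Step 5. Since the EPR route from the reality criterion to the incompleteness of quantum mechanics requires the full chain of Steps 1--5 to yield predetermined real values for multiple incompatible second-region observables, it suffices to exhibit, for each gap, a concrete probabilistic model in which the hypotheses of the relevant step are satisfied but its conclusion fails. Proposition~\ref{prop2} then follows at once: the EPR reality criterion, unsupplemented by further assumptions, cannot license the inference to quantum incompleteness.

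To expose the asymmetry gap, I would construct a simple two-system model in which a measurement of $x_0$ perfectly determines the outcome of a subsequent measurement of $y_0$ via some function $b=f(a)$, yet the marginal distribution of $b$ in the absence of any prior measurement on the first system is non-trivial. The reality criterion, as literally stated, would then assign reality to $b$ only conditional on the remote measurement having actually been selected or performed; the ``predetermined'' value of $b$ is not a property of the distant system in isolation. Steps 1--4 therefore do not yield an element of reality in a setting-independent sense, and the incompleteness inference stalls at once.

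To expose the joint measurement gap, I would exhibit a model with two pairs of perfectly correlated measurements in which applying Steps 1--4 separately to $(x_0,y_0)$ and to $(x_1,y_1)$ yields an element of reality for $y_0$ or for $y_1$ in isolation, but no joint probability distribution on the pair $(y_0,y_1)$ is consistent with the two marginals demanded by the EPR-predicted correlations---so that no simultaneous assignment of real values to both is possible. Without Step 5 one cannot accuse quantum mechanics of omitting elements of reality for multiple incompatible observables at once, since on any given run only a single remote measurement actually reveals anything. The main obstacle in both constructions will be ensuring that the counterexample models cleanly satisfy EPR's informally phrased ``no disturbance'' clause together with the perfect-correlation premise, so that the failure of the conclusion can be traced unambiguously to the reality criterion itself rather than to an unnoticed auxiliary assumption; the explicit models are deferred to Sections~\ref{symm} and~\ref{joint}.
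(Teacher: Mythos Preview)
Your high-level structure is right---Proposition~\ref{prop2} is indeed obtained by exhibiting counterexamples for the two gaps and noting that the EPR chain to incompleteness needs the full sequence of Steps~1--5. However, your sketch of the asymmetry-gap counterexample misidentifies what the asymmetry actually is, and as written it may not satisfy the hypotheses you need.

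The paper's asymmetry gap concerns the \emph{direction} of the no-disturbance clause: EPR require only that measuring in the first region not disturb the second, and say nothing about the reverse. The counterexample in Section~\ref{symm} (Example~\ref{ex1}) exploits this by having the measurement device in region~2 send a (possibly superluminal or retrocausal) signal that prepares the particle in region~1 in an eigenstate correlated with the region-2 outcome. The EPR conditions are then literally met---measuring $x_0$ does not disturb region~2, and the outcome of $y_0$ is predictable with certainty from the result of $x_0$---yet $b$ is generated by a genuinely random process with no pre-existing value. Your description instead posits a model in which ``the marginal distribution of $b$ in the absence of any prior measurement on the first system is non-trivial'' and the reality of $b$ is conditional on $x_0$ being performed; but if whether or not $x_0$ is measured affects anything about $b$, you are in danger of violating the very no-disturbance clause (region~1 $\to$ region~2) that the counterexample must respect. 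The clean way through is the one-way-signalling model with disturbance running $2\to 1$.

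For the joint measurement gap, your proposed construction---showing that no joint distribution on $(y_0,y_1)$ is consistent with the separately inferred marginals---is more elaborate than what the paper does and is not obviously easier to carry out. The paper's point is simpler and located on the \emph{first}-region side: Step~1 presupposes that $x_0$ is actually measured, so to run Steps~1--4 for both pairs one would need to jointly measure $x_0$ and $x_1$. When these are experimentally incompatible (position and momentum, or two Stern--Gerlach orientations), Step~5 simply does not follow. No appeal to inconsistency of second-region joint distributions is required.
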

\begin{proposition}  \label{prop3} The EPR reality criterion is insufficient to derive determinism for perfect correlations, i.e.,
	\beq
	{\rm EPR~reality~criterion} + {\rm perfect~correlations} \centernot \implies {\rm determinism}  .
	\eeq
\end{proposition}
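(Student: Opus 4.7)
The plan is to establish Proposition~\ref{prop3} by counterexample: to exhibit a statistical model $p(a,b|x,y,\lambda)$ that displays perfect correlations, satisfies the hypotheses of the EPR reality criterion in Definition~\ref{epr}, and yet fails outcome determinism because some marginal $p(b|y,\lambda)$ is not concentrated on a single value.

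The first step is to diagnose precisely what Steps~1--4 of the EPR argument license. Read carefully, the conclusion that ``$b$ is an element of reality with value $f(a)$'' is drawn only under the hypothesis that the distant measurement $x_0$ has in fact been performed and has produced result $a$. The argument is silent about the value of $b$ in the counterfactual branch in which $x_0$ is not performed at all. This is the ``asymmetry gap'' to be developed in Section~\ref{symm}: the criterion as stated in Definition~\ref{epr} does not preclude models in which the reality of $b$ is effectively established by the act of measuring $x_0$, rather than being a property that exists in the distant region independently of what is done in the other.

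Building on this diagnosis, I would construct the simplest model exploiting the gap. A natural candidate is a two-outcome scheme with $p(a,b|x,y,\lambda) = \tfrac{1}{2}\delta_{a,b}$ for every $\lambda$, or equivalently the PR-box model already used in the proof of Proposition~\ref{prop1}; in both cases perfect correlation holds for each setting, and the prediction of $b$ from $a$ involves no signal traversing the spacelike separation, so the ``no disturbance'' clause of the criterion can be read as satisfied. Yet the marginal $p(b|y,\lambda)$ is uniform, so determinism in the sense of predetermined outcomes fails. The chief obstacle I anticipate is pinning down the interpretation of ``without in any way disturbing a system'' in Definition~\ref{epr}: too broad a metaphysical reading risks smuggling determinism in through the premises, so I would commit to the operational reading, consistent with spacelike separation, which is the one Bell himself evidently intends. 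With that settled, the counterexample delivers Proposition~\ref{prop3}, and the stronger incompleteness conclusion of Proposition~\ref{prop2} can be targeted separately via the joint-measurement gap of Section~\ref{joint}.
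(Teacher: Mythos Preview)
Your proposal misidentifies which gap is at work and reverses the paper's attribution of the two gaps to Propositions~\ref{prop2} and~\ref{prop3}. What you describe and label as the ``asymmetry gap''---that Steps~1--4 license the reality of $b$ only conditionally on $x_0$ actually having been measured---is not the asymmetry gap of Section~\ref{symm} at all. The paper's asymmetry gap concerns the \emph{direction of disturbance}: the EPR criterion forbids disturbance of the second region by operations in the first, but says nothing about the reverse direction. The paper's Example~\ref{ex1} exploits exactly this, via an explicit one--way (superluminal or retrocausal) signal from region~2 to region~1 that fixes $a$ once the random $b$ is generated; the EPR premises are then literally satisfied yet $b$ is not predetermined. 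Your PR--box model has no such asymmetry (parameter independence holds both ways), so it does not instantiate that gap, and it leaves the ``no disturbance'' clause in precisely the interpretive limbo you yourself flag as the chief obstacle.

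More importantly, the paper derives Proposition~\ref{prop3} from the \emph{joint measurement gap} of Section~\ref{joint}, not from the asymmetry gap. The relevant counterexamples are Bohr's observation that position and momentum of the first particle cannot be jointly measured (so one cannot jointly infer both for the second particle), the analogous incompatibility of Stern--Gerlach orientations, and classical models with an epistemic restriction. These block Step~5---which is what both EPR's incompleteness conclusion and Bell's 1964 inequality require, namely determinism for \emph{multiple} perfectly correlated pairs simultaneously. The paper attaches Proposition~\ref{prop2} to the asymmetry gap (via Example~\ref{ex1}) and Proposition~\ref{prop3} to the joint measurement gap, the reverse of your proposed allocation. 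Your underlying observation, that the unstrengthened criterion yields only conditional rather than prior reality, is valid and is essentially what motivates the ``prior to us making the prediction'' clause in Definition~\ref{counter}; but as a proof of Proposition~\ref{prop3} it is neither the paper's argument nor cleanly supported by the PR--box counterexample you propose.
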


Not all is lost, however. In particular, as previewed above, it is not difficult to formulate a stronger form of the reality criterion that (together with an ``accessible choice'' condition) allows the EPR argument to be completed and determinism to be derived for perfect correlations, as will be done in Section~\ref{sec:strong}.

\subsection{Two gaps in the logic}
\label{sec:gaps}

Propositions~\ref{prop2} and~\ref{prop3} above follow from counterexamples that highlight two gaps in the EPR logic, preventing Steps~4 and~5 from going through without additional assumptions. These gaps are explicitly identified and discussed below.

\subsubsection{The asymmetry gap}
\label{symm}

The first gap in the EPR logic is relatively minor, and arises from the inherent asymmetry of the argument: it requires only that a measurement made in the first region does not disturb the system in the second region in any way, as per Step~3 (e.g., ``at the time of measurement \dots no real change can take place in the second system in consequence of anything that may be done to the first system''~\cite{epr}). This does not rule out, however, the possibility that making a measurement in the second region  disturbs the system in the first region. Note this is consistent with no direct interaction between the two systems if it is the measurement device that is responsible for the disturbance. 

A class of counterexamples that exploits this gap is to allow (possibly superluminal) signalling from the second region to the first region, but not vice versa.
\begin{example} \label{ex1}
	Suppose that each region contains a single spin-$\half$ particle prepared in a totally mixed state, and that if a device measures spin in the $\bm y$ direction of the second particle, with result $b=\pm1$, it sends a signal (e.g., superluminally or along its backward lightcone) that puts the first particle into the $-b$ eigenstate of spin in the $\bm y$ direction. Hence, if a measurement of spin in the $\bm x$ direction is made in the first region, the statistics of the singlet state are reproduced, i.e., $p(a,b|\bm x,\bm y)=\frac14(1-ab\,\bm x\cdot\bm y)$. 
\end{example}

In particular, whenever both spins are measured in the same direction, i.e., $\bm x = \bm y$, then (i)~the outcome in the second region can be predicted with probability unity from the result of a measurement in the first region, and (ii)~there is no disturbance of the second region by any measurement carried out in the first region, thus fulfilling the conditions for the EPR reality criterion.  Nevertheless, contrary to Step~4 of the EPR logic, there is no pre-existing element of reality for the outcome of the measurement in the second region: a totally random and unpredictable result $b=\pm 1$ is obtained for any spin direction $\bm y$. Thus, use of the EPR reality criterion to derive a predetermined value for $b$ fails, providing a counterexample to the EPR logic, and Proposition~\ref{prop2} immediately follows. 


The obvious way to close the asymmetry gap is simply to symmetrise the EPR logic, by further assuming that measurements in the second region do not disturb the first region in any way (which has the additional advantage that one can also obtain the reality of the outcome of $x_0$, via Steps~1--4 with the roles of $x_0$ and $y_0$ reversed). Thus both $a$ and $b$ are predetermined {\it if} measurements of both $x_0$ and $y_0$ are made. It turns out, however, that closing the asymmetry gap in this way is not sufficient for completing the EPR argument: there is a second gap!

\subsubsection{The joint measurement gap}
\label{joint}

To expose the second (and most important) gap in the EPR logic (see also the analysis in Section~2 of~\cite{shimreview}), observe that it starts with the assumption in Step~1 that $x_0$ is measured.  Without this assumption the  value of $b$ cannot be predicted with certainty, as required for Steps~2 and~4. 
It follows that, even if the asymmetry gap is closed via symmetrisation as above,  
one cannot proceed directly from Step~4 to Step~5. In particular, if there is also a second pair of perfectly correlated measurements $x_1$ and $y_1$,  then it follows from Steps~1--4 that the outcomes of both $y_0$ and $y_1$ are real and predetermined  only if {\it both} $x_0$ and $x_1$ are measured without disturbing the predicted perfect correlations. Conversely, if they cannot be jointly measured in this way then Step~5 does not logically follow---there is a ``joint measurement'' gap.

There are various classes of counterexamples that exploit the joint measurement gap. In his response to the EPR paper, Bohr gives a quantum counterexample~\cite{bohr}. In particular, one cannot jointly measure the position and momentum of a first particle as they require physically incompatible experimental arrangements, and hence one cannot jointly infer the position and momentum of a distant second particle via the EPR logic.  The spin measurements considered by Bell~\cite{bell1964} supply a similar counterexample (since a Stern-Gerlach magnet cannot simultaneously have two measurement orientations). One can also construct   classical counterexamples, in which the required joint measurement of position and momentum is forbidden by an epistemic restriction~\cite{bart}. Proposition~\ref{prop3} immediately follows from these counterexamples.

EPR allude to the joint measurement gap in the penultimate paragraph of their paper, where they reject a possible replacement of their reality criterion  by a more restrictive criterion (in reference to the reality of the momentum $P$ and position $Q$ of a particle in the second region)~\cite{epr}:
\begin{quote}
	``Indeed, one would not arrive at our conclusion if one insisted that two or more physical quantities can be regarded as simultaneous elements of reality {\it only when they can be simultaneously measured or predicted.} \dots This makes the reality of $P$ and $Q$  depend upon the process of measurement carried out on the first system, which does not disturb the second system in any way. No reasonable definition of reality could be expected to permit this.''
\end{quote}
This quote suggests that a ``reasonable'' way to close the joint measurement gap is by requiring that the reality of a physical quantity in a given region is {\it prior} to any process of making a measurement that does not disturb that region (Clauser and Shimony call such independence of reality from measurement ``physical realism''~\cite{shimreview}). This is a {\it counterfactual} requirement, i.e., an element of reality is assumed to exist whether or not some measurement is made from which it can be inferred (see also Section~7.4.2 of~\cite{muynck} and Section~2 of~\cite{bruk}), and will be seen to allow Step~5  of the EPR logic in Section~\ref{sec:logic} to go through {\it if} it is possible to measure either one of $x_0$ and $x_1$ in the first region. The latter condition is essentially a ``measurement choice'' assumption (usually left implicit in discussions of the EPR argument), that rules out superdeterministic counterexamples in which there is only a single predetermined choice of measurement for any experiment, fixed by initial conditions in the far past.

\section{Recovering Bell's claim (and more) from a strengthened EPR argument}
\label{sec:strong}

\subsection{Closing the gaps with a stronger reality criterion and a measurement choice assumption}
\label{close}

It is clear from the counterexamples discussed in Section~\ref{sec:gaps} that the EPR logic is not sufficient to rigorously support Bell's claim or to derive the incompleteness of quantum mechanics, as per Propositions~\ref{prop2} and~\ref{prop3}: stronger assumptions are needed. It is further clear from the discussion of the counterexamples that natural choices for these further assumptions are a stronger (symmetrised) non--disturbance assumption to close the first gap, and a stronger (counterfactual) reality assumption and a measurement choice assumption to close the second gap. 

The first two assumptions are conveniently implemented by strengthening the EPR reality criterion in Definition~\ref{epr} as follows. 
\begin{definition}[Counterfactual reality criterion] \label{counter}
	If, without in any way disturbing {\textit{\textbf{or being disturbed by}}} a system, we  can predict with certainty (i.e., with probability equal to unity) the value of a physical quantity, then there exists, {\textit{\textbf{prior to us making the prediction}}}, an element of physical reality corresponding to this physical quantity. 
\end{definition}
Here the bolded phrases indicate additions to the EPR reality criterion in Definition~\ref{epr},   which are clearly necessary for closing the asymmetry and joint measurement gaps respectively in the light of the counterexamples in Section~\ref{sec:epr}.  
This leaves the third assumption to be implemented as an explicit condition:  
\begin{definition}[Accessible choice] \label{choice} 
	It is always possible to choose between alternative measurements.
\end{definition}  

This assumption, that no measurement is inaccessible to the choice of the experimenter,    is clearly necessary for closing the joint measurement gap   (see last paragraph of Section~\ref{joint}).   In particular, it ensures that different elements of physical reality, corresponding to making  different measurements, can be both be inferred via the EPR argument. 
It is not particularly controversial, and indeed Bohr emphasised in his reply to EPR that  ``we are \dots left with a {\it free choice} whether we wish to know the momentum of the particle or its initial position''~\cite{bohr}. 
Further, as will be discussed in Sec.~\ref{sec:free}, it is weaker than the ``measurement independence'' assumption required for deriving Bell inequalities.   

The counterfactual reality criterion and accessible choice assumption allow modified versions of Steps~4 and~5 of the EPR logic in Section~\ref{sec:logic} to go through, yielding positive counterparts to the no--go results in Propositions~\ref{prop1} and~\ref{prop3}, as shown below.

\subsection{Deriving determinism}
\label{sec:incom}

The stronger criterion of counterfactual reality in Definition~\ref{counter}, combined with the accessible choice assumption in Definition~\ref{choice}, restores the desired logical rigour of the EPR argument for the incompleteness of quantum mechanics. In particular, in contrast to the negative results in Propositions~\ref{prop1} and~\ref{prop3}:
\begin{proposition} \label{prop4} Deterministic outcomes, for pairs of perfectly correlated measurements in separated regions, follows from the counterfactual reality criterion and accessible choice of measurement, i.e.,
	\begin{align} 
		{\rm counterfactual~reality~criterion} +{\rm accessible~choice} &+ {\rm perfect~correlations} \nn\\
		&\qquad~~~~\implies   {\rm determinism} .
	\end{align}
\end{proposition}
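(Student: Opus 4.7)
The plan is to apply the counterfactual reality criterion in a symmetric and iterable manner, using the accessible choice assumption to ensure that the hypothesis of the criterion---certain prediction via a non-disturbing measurement---can always be invoked. I will proceed in two stages: first a single-pair argument that closes the asymmetry gap, and then an iteration that closes the joint measurement gap identified in Section~\ref{sec:gaps}.

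For the single-pair stage, I would fix any perfectly correlated pair $(x,y)$ with outcomes related by some bijection $b=f(a)$. By accessible choice (Definition~\ref{choice}), the experimenter in region 1 is free to select measurement $x$; because the regions are separated, neither region disturbs nor is disturbed by operations in the other, so the strengthened non-disturbance hypothesis of Definition~\ref{counter} is satisfied. Perfect correlation then lets one predict the outcome of $y$ in region 2 with certainty. Hence the counterfactual reality criterion asserts that, prior to the prediction (and thus prior to any actual measurement of $y$), the value of $b$ exists as an element of physical reality. By the symmetric roles of the two regions, the same argument with $x$ and $y$ interchanged gives that $a$ is likewise an element of reality prior to any measurement.

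For the iterative stage---which is the point at which the original EPR argument fails---I would consider any collection of perfectly correlated pairs $(x_j,y_j)$. Accessible choice guarantees that the experimenter in region 1 could choose any $x_j$; the single-pair argument, applied to each choice in turn, yields an element of reality for the corresponding $y_j$. The essential feature of the strengthened criterion (the bolded clause \emph{``prior to us making the prediction''}) is that each such element of reality is asserted to exist independently of which measurement was actually performed. Hence the values of $y_0,y_1,\ldots$ coexist as simultaneous elements of reality even when the $x_j$ are not jointly measurable, and symmetrically for the $x_j$. This is precisely the claim that all perfectly correlated outcomes are predetermined, i.e., outcome determinism.

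The main obstacle is conceptual rather than computational, and lies entirely in the iterative stage: one must argue carefully that counterfactual reality together with accessible choice is genuinely enough to bridge the joint measurement gap, without surreptitiously invoking a full free-choice (measurement independence) assumption. The key is to use accessible choice only to underwrite the legitimacy of applying the reality criterion (so that its antecedent is not vacuous), and to use the counterfactual clause of Definition~\ref{counter} to decouple the \emph{existence} of the element of reality from which measurement was actually realised. I would want to state this decoupling explicitly, so that the reader sees why the counterexamples of Section~\ref{joint} (such as Bohr's position/momentum reply and the PR-box) are blocked precisely by the new clauses, and not by any stronger assumption.
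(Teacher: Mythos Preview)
Your proposal is correct and follows essentially the same route as the paper: its proof presents modified EPR Steps 1--5$'$, with your single-pair stage corresponding to Steps~1--4$'$ (Step~3$'$ symmetrises non-disturbance to close the asymmetry gap) and your iterative stage corresponding to Step~5$'$ (accessible choice plus the counterfactual clause close the joint measurement gap). One minor slip in your closing commentary: the PR-box is a counterexample from Proposition~\ref{prop1}, not from Section~\ref{joint}.
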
	
\begin{proof}
	The proposition is shown by modifying the EPR logic in Section~\ref{sec:logic}, for two pairs of perfectly correlated measurements $x_0,y_0$ and $x_1,y_1$, as follows (with primes and italics indicating modified steps): 
	\begin{itemize}
		\item[1.] Assume measurement $x_0$ is made in a first region, with result $a$ (assumption).
		\item[2.] The outcome of a measurement $y_0$ in a distant second region can then be predicted as $b=f(a)$ with certainty, for some 1:1 function $f$ (perfect correlation).
		\item[3$^\prime$.] This prediction can be obtained without disturbing {\it or being disturbed by} the distant region in any way (assumption).
		\item[4$^\prime$.] Hence, the value of $b$ is an element of physical reality, prior to any actual measurement of $y_0$ in the distant region, {\it and prior to making the prediction via an actual measurement of $x_0$} (counterfactual reality criterion).
		\item[5$^\prime$.] If $x_1$ and $y_1$ are a second pair of perfectly correlated measurements, for the first and second regions respectively, {\it and each of $x_0$ or $x_1$ are possible measurement choices in the first region}, then applying the same steps as above implies that the outcomes of both $y_0$ and $y_1$ are real and predetermined {\it prior to any actual measurement of $x_0$ or $x_1$} (accessible choice).
	\end{itemize}
\end{proof}

It is seen that Step~3$^\prime$ of the proof of Proposition~\ref{prop4} closes the asymmetry gap in Section~\ref{symm}, and Steps~4$^\prime$ and~5$^\prime$ close the joint measurement gap in Section~\ref{joint}. 
Unlike EPR, however, it will not be concluded here that the incompleteness of quantum mechanics follows, for the reasons discussed in Section~\ref{sec:free} further below.  

It is also seen that Step 4$^\prime$ of the modified EPR logic in the proof of Proposition~\ref{prop4} is sufficient to demonstrate that the outcome of $y_0$ is predetermined for two perfectly correlated measurements $x_0$ and $y_0$.  Step 5$^\prime$  then further demonstrates that predetermined values for the outcomes of {\it multiple} pairs of perfectly correlated measurements, as required for the derivation of Bell's 1964 inequality~\cite{bell1964}. 

Thus Proposition~\ref{prop4} supports Bell's claim, quoted at the beginning of Section~\ref{sec:tooweak}, that determinism can be inferred via the EPR argument --- with the caveat that a strengthened form of the argument is needed, as above, to close the identified logical gaps.

\subsection{Deriving parameter independence}

With some additional work one can also obtain a rigorous derivation not only of determinism, as per Proposition~\ref{prop4}, but also of parameter independence, i.e., of the weak form of statistical locality in Definition~\ref{weak}. This is the content of Proposition~\ref{prop5} below.  First, however, a simple lemma is required.

\begin{lemma} 
	If the conditional distribution of a random variable $u$  is deterministic for given prior information $v$ (i.e., $p(u|v)=0$ for all but one value of $u$), then it remains deterministic when conditioned on any further information $w$ compatible with $v$ (i.e., $p(u|v,w)=0$ for all but one value of $u$).
\end{lemma}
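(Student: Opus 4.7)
The plan is to prove the lemma by the elementary probabilistic decomposition
\beq
p(u|v) \,=\, \sum_w p(u,w|v) \,=\, \sum_w p(u|v,w)\, p(w|v),
\eeq
which follows directly from the basic rules of probability as used in Eq.~(\ref{bayes}), with the sum replaced by an integral for continuous $w$. This identity is really all the machinery the argument needs.

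First I would fix the unique value $u^*$ for which $p(u^*|v)=1$; by the hypothesis of the lemma, $p(u|v)=0$ for every $u \neq u^*$. Substituting into the decomposition above then gives
\beq
0 \,=\, p(u|v) \,=\, \sum_w p(u|v,w)\, p(w|v) \qquad \text{for all } u \neq u^*.
\eeq
Next I would invoke non-negativity of probabilities: since each summand is non-negative, every individual term must vanish, so $p(u|v,w)\, p(w|v) = 0$ for all $w$ and all $u \neq u^*$.

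Finally, I would pin down the meaning of ``$w$ compatible with $v$'' as simply $p(w|v) > 0$ (otherwise the conditional distribution $p(u|v,w)$ is either undefined or specifies an event of measure zero and is immaterial). For any such compatible $w$ the vanishing product forces $p(u|v,w)=0$ for every $u \neq u^*$, so the conditional distribution remains concentrated on $u^*$, which is exactly the claim.

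I do not anticipate any real obstacle: the result is a short exercise in elementary probability, and the only subtlety is fixing a precise reading of ``compatibility'' and, if one wishes to be careful in the continuous setting, noting the standard caveat that the identities hold almost everywhere with respect to the conditional distribution of $w$ given $v$. A brief parenthetical remark should be enough to dispose of both points.
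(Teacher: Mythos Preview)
Your proof is correct and essentially the same as the paper's. The paper also interprets ``compatibility'' as $p(w|v)>0$ and, rather than writing out the sum over $w$, simply bounds a single joint term via $0\leq p(u,w|v)\leq p(u|v)=0$ (which is of course the same marginalisation identity you use) and then divides by $p(w|v)$ to conclude $p(u|v,w)=0$ for $u\neq u^*$.
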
 
\begin{proof}
	We have $p(u|v)=0$ for all $u\neq u_0$, for some $u_0$ (determinism), and $p(w|v)>0$ (compatibility). Hence $0\leq p(u,w|v)\leq p(u|v)=0$ for $u\neq u_0$, and so, using the standard rules of probability, $p(u|v,w)=p(u,w|v)/p(w|v)=0$ for all $u\neq u_0$.
\end{proof}

This lemma is used several times in the proof of the following.

\begin{proposition} \label{prop5} Parameter independence for the outcomes of any sets of possible measurements $\{x_r\}$ and $\{y_s\}$, in respective separated regions, follows from the counterfactual reality criterion and the accessible choice condition if  each pair of measurements $(x_s,y_s)$  is perfectly correlated, i.e.,
	\begin{align} 
		{\rm counterfactual~reality~criterion} +{\rm accessible~choice} &+ {\rm perfect~correlations} \nn\\
		&\implies   {\rm parameter~ independence} .
	\end{align}
\end{proposition}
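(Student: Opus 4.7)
The plan is to derive parameter independence as a short consequence of Proposition~\ref{prop4} combined with the preceding Lemma. The intuition is that, once counterfactual reality together with perfect correlations has forced every outcome to be predetermined by the underlying state, the outcome probability conditioned on the local setting cannot possibly depend on the distant setting, and the Lemma simply makes this rigorous at the level of probability theory.

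The first step I would take is to apply Proposition~\ref{prop4} to each perfectly correlated pair $(x_s, y_s)$ in turn. For every $s$ this yields a predetermined value $\alpha_s$ for the outcome of $x_s$ and $\beta_s$ for that of $y_s$, which, by the counterfactual clause of Definition~\ref{counter} together with accessible choice, are elements of reality carried by $\lambda$ prior to any measurement actually being performed in either region. Consequently the one-sided conditional $p(a | x_r, \lambda) = \delta_{a, \alpha_r}$ is already deterministic (viewing $\alpha_r$ as a function of $\lambda$), and analogously for outcomes $b$ in the second region.

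The second step is to invoke the Lemma with $u = a$, $v = (x_r, \lambda)$, and $w = y_s$. The compatibility hypothesis $p(y_s | x_r, \lambda) > 0$ is precisely what accessible choice (Definition~\ref{choice}) furnishes: the experimenter in the second region is free to select any $y_s$ irrespective of $x_r$ or $\lambda$, so the additional conditioning is never incompatible with the prior information. The Lemma then gives $p(a | x_r, y_s, \lambda) = \delta_{a, \alpha_r} = p(a | x_r, \lambda)$, which is the first half of Definition~\ref{def1}. The second half, concerning $b$, follows from the symmetric argument with the roles of the two regions interchanged.

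The main obstacle I anticipate is the modelling bookkeeping at the first step: one has to justify carefully that the elements of reality supplied by Definition~\ref{counter} are genuinely encoded in the parameter $\lambda$ (so that it is legitimate to write $p(a | x_r, \lambda)$ as a deterministic function of $\lambda$ alone), and that the accessible choice condition does translate cleanly into the compatibility hypothesis $p(y_s | x_r, \lambda) > 0$ required by the Lemma. Once those conventions are pinned down, the remainder of the argument is essentially a single application of the Lemma and requires no further probabilistic work.
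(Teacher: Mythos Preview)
Your proposal is correct and follows essentially the same route as the paper's proof: invoke Proposition~\ref{prop4} to obtain deterministic outcomes encoded in the underlying parameters, then use the Lemma together with accessible choice to show that further conditioning on the distant setting leaves these distributions unchanged. The only real difference is that the paper carries out explicitly the bookkeeping you flag as the main obstacle, introducing separate underlying variables $\lambda'_s,\lambda''_s$ for each perfectly correlated pair and then applying the Lemma a second time to merge them into a single $\lambda$.
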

\begin{proof}
	Note first  that, if no assumptions are made, the probability  of outcome $b$ of some measurement $y_0$ could conceivably depend not only on $y_0$, but also on a distant measurement $x_0$ and its outcome $a$, as well as on some set of underlying variables $\lambda'_0$. Hence, in the most general case, this probability has the form  $p(b|a,x_0,y_0,\lambda'_0)$. Now, for the case of two perfectly correlated measurements $x_0$ and $y_0$, it follows from Proposition~\ref{prop4} (and in particular from the counterfactual reality criterion as per Step~4$^\prime$ in the proof of that Proposition),  that the outcome of $y_0$ is predetermined, and hence that $p(b|a,x_0,y_0,\lambda'_0)=0$ for all but the predetermined value of $b$, {\it prior} to making a measurement of $x_0$ to obtain some outcome $a$. Further, from Step~3$^\prime$ in the proof of Proposition~\ref{prop4}, actually making a measurement of $x_0$ to obtain some outcome $a$ in the first region does not disturb  the predetermined value of $b$ in the second region, i.e., $x_0$ and $a$ are redundant for determining the value of $b$. Hence $p(b|a,x_0,y_0,\lambda'_0)$ = $p(b|y_0,\lambda'_0) = 0$
	for all but one value of $b$.
	
	Second, suppose that some other measurement, $x_1$ say, rather than $x_0$, is actually then made in the first region. Then, since we have $p(b|y_0,\lambda'_0) = 0$ for all but one value of $b$, and noting $x_1$ must be compatible with $y_0$ and $\lambda_0'$ if it is possible to make it (as guaranteed by the accessible choice condition), it follows from the above Lemma that
	\beq p(b|x_1,y_0,\lambda'_0)=p(b|y_0,\lambda'_0) = 0
	\eeq
	for all but one value of $b$. Interchanging the roles of the first and second regions in the argument thus far, and letting $\lambda''_0$ denote any additional set of underlying variables needed to determine the outcome of $x_0$, it similarly follows that
	\beq p(a|x_0,y_1,\lambda''_0) = p(a|x_0,\lambda''_0) = 0 
	\eeq
	for all but one value of $a$, for any measurement $y_1$ that can be made in the second region. 
	
	Third, suppose there is some set of perfectly correlated pairs of measurements, $\{(x_s,y_s):s\in S\}$, for some index set $S$. The above two equations then generalise to, replacing subscripts 0 and 1 by $r$ and $s$,
	\beq \nn p(b|x_r,y_s,\lambda'_s) = p(b|y_s,\lambda'_s) =0,\qquad p(a|x_s,y_r,\lambda''_s) = p(a|x_r,\lambda''_s) =0 ,
	\eeq
	for all but one value of $a$ and $b$ and all $r,s\in S$.
	Finally, letting $\lambda$ denote the set of pairs $\{(\lambda'_s,\lambda''_s):s\in S\}$, and swapping $r$ and $s$ in the second equation, application of the above Lemma to each of the distributions gives
	\beq p(a|x_r,y_s,\lambda)=p(a|x_r,\lambda)=0,\qquad p(b|x_r,y_s,\lambda) = p(b|y_s,\lambda) = 0,
	\eeq
	for all but one value of $a$ and $b$ and for all $r,s\in S$. Thus, the outcomes of {\it all} measurement pairs $(x_r,y_s)$ are deterministic, and satisfy parameter independence as per Eq.~(\ref{weak}), as required.
\end{proof}

Proposition~\ref{prop5}, in  demonstrating that both determinism and parameter independence follows from the strengthened EPR argument, shows that the latter notion of locality is consistent with that used in Bell's 1964 paper.  However, this closing of the circle  will be seen in the next section to come at a cost for the EPR argument: it can no longer be used to support the incompleteness of quantum mechanics.

\section{The EPR argument clashes with quantum predictions, not quantum completeness}
\label{sec:free}

The original aim of the EPR argument was to deduce the incompleteness of quantum theory from a simple reality criterion (Definition~\ref{epr}). As noted in Section~\ref{sec:logic}, the EPR logic was to argue this criterion leads to determinism for the outcomes of perfectly correlated pairs of position and momentum measurements, and observe that quantum mechanics fails to assign any such deterministic values. Thus, it appeared reasonable to conclude that ``the quantum-mechanical description given by wave functions is not complete''~\cite{epr}. There also appeared to be no fundamental issue standing in the way of a complete description, compatible with their reality criterion, leading EPR to write that ``\dots we left open the question of whether or not such a description exists. We believe, however, that such a description is possible''~\cite{epr}.

However, this belief has since turned out to be incompatible with the EPR argument itself. In particular, various results in the literature have established that the argument is in direct conflict with several predictions of quantum mechanics~\cite{redhead,ghzshim,mermin1990,conway} (see also the forceful exposition by \.{Z}ukowski and Brukner~\cite{bruk} in this regard).  Thus, there is no completion of quantum theory of the sort advocated by EPR.

A general version of the above result is formulated in Proposition~\ref{prop6} below, upgrading Proposition~\ref{prop2}. This version is also helpful in clarifying an important distinction between the roles and relative strengths of ``accessible choice'' in the EPR argument versus ``free choice'' in the derivation of Bell inequalities. The related Proposition~\ref{prop7} given below involves a similar distinction, and emphasises the relevance of a recent class of experimentally testable inequalities for ``measurement dependent local'' models of correlations~\cite{putz1,putz2,putz3,brunner2023}.

\begin{proposition} \label{prop6} The assumptions underpinning the EPR argument, when strengthened to avoid the logical gaps noted in Section~\ref{sec:epr}, are theoretically inconsistent with quantum mechanics, i.e., 
	\begin{align} 
		{\rm counterfactual}&~{\rm reality~criterion} +{\rm accessible~choice}  \nn\\
		&~~\qquad\implies   {\rm ~mathematical~inconsistency~with~quantum~predictions} .
	\end{align}
\end{proposition}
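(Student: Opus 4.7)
The plan is to turn the strengthened EPR argument against quantum theory by letting quantum mechanics itself supply the perfect correlations that drive Propositions~\ref{prop4} and~\ref{prop5}. If quantum mechanics predicts enough perfect correlations to determine, via those propositions, an algebraically impossible family of predetermined outcomes, then the strengthened EPR assumptions cannot coexist with quantum predictions. The task therefore reduces to exhibiting a single quantum scenario in which the derived deterministic, parameter--independent assignment leads to a contradiction.

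First I would extend the two--region versions of Propositions~\ref{prop4} and~\ref{prop5} to three spacelike--separated regions: the argument goes through verbatim by treating any one region as the ``distant'' partner of the union of the remaining two, so that the counterfactual reality criterion still applies to any outcome that can be predicted with certainty from a joint measurement elsewhere. Then I would invoke the Greenberger--Horne--Zeilinger--Mermin state $\ket{\rm GHZ}=(\ket{000}+\ket{111})/\sqrt{2}$: quantum mechanics predicts perfect correlations for the four joint settings $\s{x}\otimes\s{x}\otimes\s{x}$, $\s{x}\otimes\s{y}\otimes\s{y}$, $\s{y}\otimes\s{x}\otimes\s{y}$, $\s{y}\otimes\s{y}\otimes\s{x}$, with outcome products $+1,-1,-1,-1$ respectively. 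By the three--region generalisation of Proposition~\ref{prop4}, each particle $i$ must carry predetermined values $X_i,Y_i\in\{\pm 1\}$ for both $\s{x}$ and $\s{y}$ measurements. Multiplying the four deterministic constraints then yields $(X_1 X_2 X_3)(X_1 Y_2 Y_3)(Y_1 X_2 Y_3)(Y_1 Y_2 X_3) = X_1^2 X_2^2 X_3^2 Y_1^2 Y_2^2 Y_3^2 = +1$, whereas the quantum predictions multiply to $(+1)(-1)(-1)(-1)=-1$, giving the desired contradiction.

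The main obstacle is conceptual rather than technical: I must take care to use only the ``accessible choice'' condition of Definition~\ref{choice} and not the stronger statistical ``measurement independence'' assumption needed to derive Bell--CHSH--type inequalities. The GHZ contradiction above is purely algebraic---no averaging over hidden variables against settings is performed---so accessible choice suffices, since it merely asserts that each of the four measurement contexts could have been selected, and hence that the counterfactual reality criterion applies within each context on a single underlying state. This is precisely the asymmetry the paper wishes to highlight: accessible choice is strong enough to clash with quantum predictions via an all--or--nothing GHZ argument, yet is too weak to derive the statistical, experimentally testable inequalities taken up in Proposition~\ref{prop7}.
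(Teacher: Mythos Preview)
Your proposal is correct and shares the paper's overall strategy: feed quantum perfect correlations into Propositions~\ref{prop4} and~\ref{prop5} to obtain deterministic, parameter--independent values, then exhibit an algebraic contradiction with the quantum predictions. The paper's primary example, however, is the \emph{bipartite} spin--1 singlet $3^{-1/2}(\ket{1,-1}+\ket{-1,1}-\ket{0,0})$, via the Heywood--Redhead and Conway--Kochen (``free will theorem'') route, which stays entirely within the two--region framework of Definitions~\ref{counter} and~\ref{choice} and Propositions~\ref{prop4}--\ref{prop5}, so that no extension of the counterfactual reality criterion is required. Your GHZ argument is in fact explicitly mentioned by the paper as an alternative, with exactly the caveat you identify: the criterion must first be generalised to three distant regions. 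The trade--off is that your route yields an elementary one--line algebraic contradiction, while the paper's spin--1 route buys logical economy by leaving the paper's definitions untouched (at the cost of invoking the less transparent Kochen--Specker colouring obstruction). Your closing remark---that the GHZ contradiction is purely algebraic and therefore needs only accessible choice rather than full measurement independence---is precisely the point the paper stresses in distinguishing Proposition~\ref{prop6} from Bell--type inequalities.
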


\begin{proposition} \label{prop7} The assumptions of determinism, parameter independence and accessible choice are experimentally inconsistent with quantum mechanics, i.e, 
	\begin{align} 
		{\rm determinism} &+  {\rm parameter}~{\rm dependence} + {\rm accessible~choice}  \nn\\
		&\implies   {\rm testable~inconsistency~with~quantum~predictions} .
	\end{align}
\end{proposition}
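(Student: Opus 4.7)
The plan is to show that determinism together with parameter independence reduces any correlations to the form of a measurement-dependent local (MDL) model, and then invoke accessible choice to obtain a quantitative constraint under which the MDL framework of \cite{putz1,putz2,putz3,brunner2023} yields an experimentally testable inequality that quantum mechanics violates.

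First, I would reduce the hypotheses to a factorisable local form. Determinism supplies underlying variables $\lambda$ for which $p(a|x,y,\lambda)$ and $p(b|x,y,\lambda)$ are $\{0,1\}$-valued, while parameter independence in Eq.~(\ref{weak}) eliminates the dependence on the remote setting, leaving deterministic response functions $A(x,\lambda)$ and $B(y,\lambda)$. The observed correlations therefore take the form
\begin{equation}
p(a,b|x,y) = \sum_\lambda p(\lambda|x,y)\, \delta_{a,A(x,\lambda)}\, \delta_{b,B(y,\lambda)} ,
\end{equation}
which is precisely the MDL structure. Crucially, no assumption of measurement independence (i.e., $p(\lambda|x,y) = p(\lambda)$) has been invoked; only accessible choice is available, and the distribution $p(\lambda|x,y)$ may still depend on the jointly selected settings.

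Second, I would translate the qualitative accessible-choice condition of Definition~\ref{choice} into a quantitative MDL parameter. The natural reading — that no prospective setting is operationally forbidden — implies that for any experimentally realised pair $(x,y)$ and any alternative $(x',y')$ the ratio $p(\lambda|x,y)/p(\lambda|x',y')$ is bounded away from $0$ by some $\ell>0$. This is exactly the relaxed measurement-independence hypothesis of the MDL literature, which is strictly weaker than the free-choice condition needed for standard Bell inequalities. Applying the MDL inequalities of \cite{putz1,putz2,putz3,brunner2023} then bounds the CHSH-type combination of correlators by a function $f(\ell)$ that, for any $\ell$ above a known threshold, is strictly smaller than the quantum value $2\sqrt{2}$. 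Standard entanglement experiments accordingly produce violations, delivering the claimed experimental inconsistency.

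The main obstacle is precisely this quantitative reading of accessible choice: Definition~\ref{choice} is stated only qualitatively, and one could in principle envisage a superdeterministic-flavoured prior in which every setting remains nominally ``accessible'' while the induced ensemble $p(\lambda|x,y)$ varies so wildly across $(x,y)$ that no nontrivial lower bound $\ell$ survives. To close this gap cleanly I would either (i) strengthen Definition~\ref{choice} to an operational postulate that every setting pair carries a nonvanishing conditional weight on $\lambda$, or (ii) argue, as in the cosmic Bell and BIG Bell tests, that the physical implementation of the setting choices (random external signals of independent origin) secures an empirically plausible lower bound on $\ell$, so that the resulting inconsistency is genuinely experimental rather than merely conditional.
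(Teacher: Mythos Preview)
Your approach is essentially the paper's own: reduce determinism plus parameter independence to a measurement-dependent local (MDL) factorisation, then invoke the P\"utz--Gisin inequalities to obtain a testable conflict with quantum predictions. Two points are worth noting.

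First, you understate the strength of the MDL result. You write that the CHSH-type bound $f(\ell)$ is strictly below $2\sqrt{2}$ ``for any $\ell$ above a known threshold''. In fact the cited works~\cite{putz1,putz3,brunner2023} establish a quantum violation for \emph{every} $\ell>0$, however small; the paper leans on exactly this (``arbitrarily small amount of measurement independence is sufficient''). This matters for your worry in the final paragraph: there is no escape by driving $\ell$ toward zero, so long as some uniform positive bound exists at all. The experimental threshold $\ell\gtrsim 0.09$ you may have in mind is the current laboratory reach~\cite{putz2}, not a theoretical limitation.

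Second, the paper formulates the quantitative condition directly as $p(x,y|\lambda)\geq \ell>0$ (a uniform lower bound on the setting likelihood given $\lambda$), rather than your ratio bound on $p(\lambda|x,y)/p(\lambda|x',y')$; these are related by Bayes' rule but the former is the native hypothesis of the MDL literature and connects more cleanly to the accessible-choice condition $p(x,y|\lambda)>0$ in Eq.~(\ref{acc}). The paper does not explicitly bridge the gap between the pointwise positivity of Eq.~(\ref{acc}) and the uniform bound $\geq\ell$; it simply presents the MDL class as the relevant instantiation and observes that any concrete model with finitely many settings and accessible choice supplies some $\ell>0$. Your concern about wildly varying $p(\lambda|x,y)$ with no surviving uniform bound is legitimate in principle for continuous $\lambda$, but the paper treats this as outside the scope of the demonstration rather than as a loophole requiring a strengthened definition.
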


These propositions are demonstrated by example further below. Note that Proposition~\ref{prop6} immediately implies that a completion of quantum theory, of the sort envisaged by EPR, is impossible without giving up one or both of the counterfactual reality criterion or accessible choice (or, alternatively, modifying the predictions of the current theory). This is a stronger result than Bell's original 1964 inequality, which relies on a stronger choice assumption (see below). Proposition~\ref{prop7} is similarly stronger than experimentally testable Bell inequalities, such as the Clauser-Horne-Shimony-Holt (CHSH) Bell inequality~\cite{chsh}, which again rely on a stronger measurement choice assumption.

Examples demonstrating Proposition \ref{prop6} generally fall into the class of what may be termed EPR-Kochen-Specker theorems~\cite{relaxed}, applying to various sets of perfect correlations between separated quantum systems~\cite{redhead,ghzshim,mermin1990,conway}. For example, note that quantum mechanics predicts that equal spin directions of the bipartite state $3^{-1/2}(|1,-1\rangle+|-1,1\rangle-|0,0\rangle)$, corresponding to two spin-1 particles having zero total spin, are perfectly correlated~\cite{redhead,conway}. Hence, under the assumptions of Proposition~\ref{prop6}, determinism and parameter independence directly follow via Propositions~\ref{prop4} and~\ref{prop5} respectively. However, the predicted quantum correlations for this state are incompatible with the combination of determinism, parameter independence and accessible choice~\cite{redhead,conway,landsman}, and Proposition~\ref{prop6} immediately follows.
A similar contradiction may be obtained for the GHZ state of three spin-1/2 particles~\cite{ghzshim,mermin1990}, providing that the counterfactual reality criterion is generalized to perfect correlations between three distant regions. Note that the accessible choice assumption cannot be dropped in Proposition~\ref{prop6}, as it is a simple matter to construct models for such correlations that satisfy both determinism and parameter independence~\cite{relaxed}.

To verify that Proposition~\ref{prop6} is stronger than Bell's 1964 result, note that the latter relies not only on determinism and parameter dependence (which follow for perfect correlations via Propositions~\ref{prop4} and~\ref{prop5}), but also on the strong ``free choice'' assumption that measurement choices in each region are completely independent of system properties. In the context of hidden variables this latter assumption corresponds to requiring that
\beq \label{measind}
p(\lambda|x,y) = p(\lambda)
\eeq
in Eq.~(\ref{bayes}), also known as ``measurement independence'', and is required to derive Bell inequalities more generally~\cite{bellreview,relaxed}. In contrast, Proposition~\ref{prop6} only requires the accessible choice assumption in Definition~\ref{choice}, i.e., that no choice of measurement is impossible, which corresponds to
\beq \label{acc}
p(x,y|\lambda) >0 ~~{\rm for}~~ p(\lambda) >0
\eeq
in the hidden variable context. It may alternatively be termed ``measurement accessibility'' in this context, and is clearly mathematically weaker than measurement independence (noting it is equivalent to  $p(\lambda|x,y)=p(x,y|\lambda)p(\lambda)/p(x,y)>0$ for $p(\lambda)>0$).  This distinction between the degrees of measurement choice required for EPR-Kochen-Specker theorems and Bell inequalities has been previously noted by Landsman (following Eq. (C10) of~\cite{landsman}).

Finally, as noted in the Introduction,  perfect correlations are not experimentally testable~\cite{bell1981}, ruling out direct tests of the 1964 Bell inequality and the 1935 EPR argument (in contrast to later Bell inequalities~\cite{bellreview} and generalizations of EPR correlations~\cite{reid,steering}).
However, the examples demonstrating Proposition~\ref{prop7} do not rely on perfect correlations, but rather on the combination of determinism, parameter independence and accessible choice, making an experimental test of this combination possible. In particular, a class of testable inequalities for measurement dependent local models of correlations have been recently obtained by P\"utz and Gisin and others~\cite{putz1,putz2,putz3,brunner2023}, which hold for all models of correlations of the form
\beq
p(a,b,x,y) = p(a,b|x,y) p(x,y) = \sum_\lambda p(\lambda) p(x,y|\lambda) p(a|x,\lambda) p(b|y,\lambda),
\eeq
\beq
p(x,y|\lambda) \geq \ell >0,
\eeq
for some $\ell$. Such correlations clearly satisfy parameter dependence as per Eq.~(\ref{weak}) and accessible choice as per Eq.~(\ref{acc}), and include deterministic models (corresponding to  $p(a|x,\lambda), p(b|y,\lambda)\in\{0,1\}$) as a special case. Further, these inequalities are inconsistent with predicted quantum correlations for any positive value of $\ell$~\cite{putz1,putz3,brunner2023} (as has been verified by experiment for $\ell\geq0.090$~\cite{putz2}), and  Proposition~\ref{prop7} immediately follows.

	\section{Conclusions}
	\label{con}

	Bell's 1981 claim that determinism is inferred from locality and perfect correlations in his 1964 paper is seen to be justified, to a significant extent, by his appeal to the EPR argument for the incompleteness of quantum mechanics. In particular, while the EPR argument has some logical gaps in its original form, a strengthened argument based on additional assumptions leads to a rigorous derivation of determinism as per Proposition~\ref{prop4}. This provides a solid foundation for a consensus position  with respect to the Bell vs Bell debate referred to in the Introduction, that may help reconcile the two main camps to some degree (see~\cite{wiseman} for a campanology analysis of the debate).
	
	In particular, while the derivation supports the essence of Bell's claim quoted at the beginning of Section~\ref{sec:tooweak}, it nevertheless relies on the use of stronger assumptions than in the original EPR argument --- the counterfactual reality criterion and accessible choice condition in Section~\ref{close} --- that go beyond the general concept of locality. It follows that the predicted violation of the 1964 Bell inequality by perfect quantum correlations does not imply that the only option for explaining these correlations is to give up locality, as sometimes asserted~\cite{norsen2006,tumulka,shim83}.
	
	In addition to determinism,  the strengthened EPR argument also provides a rigorous derivation of parameter independence for perfect correlations, as per Proposition~\ref{prop5}, demonstrating that the argument is consistent with the weakest formulation of statistical locality consonant with Bell's 1964 paper.   Thus, despite being inconsistent with its original aim of demonstrating the incompleteness of quantum mechanics (Proposition~\ref{prop6}), the EPR logic remains of value in providing  a robust scaffolding for Bell's claim, as per Propositions~\ref{prop4} and~\ref{prop5}.  It is also of interest that the EPR argument relies on a weaker measurement choice assumption than is required for obtaining testable Bell inequalities, where this weaker assumption similarly leads to experimentally testable inequalities as per Proposition~\ref{prop7}.  
	
	Noting further that the combination of determinism and parameter dependence implies that the property of local causality, $p(a,b|x,y,\lambda)=p(a|x,\lambda)p(b|y,\lambda)$, is satisfied, it  follows from Propositions~\ref{prop4} and~\ref{prop5} that this property is also consistent with the strengthened EPR argument. This is of interest because, while local causality is equivalent to this combination for the case of perfect correlations (e.g.,~\cite{fraassen,norsen2006,maudlin,wiseman,tumulka}), it may also be used  (replacing the accessible choice condition~(\ref{acc}) with the stronger  free choice condition~(\ref{measind})) to derive general Bell inequalities that apply even in the absence of perfect correlations~\cite{bellreview,bell1981,chsh}, thus sidestepping the EPR argument altogether (see also Section~\ref{sec:free}).

	Finally, in addition to the Bell vs Bell debate referred to in the Introduction, there is a related debate over whether local causality is purely a locality concept or, in contrast, incorporates elements not directly related to locality (e.g., \cite{shim83,gisin,hallcom,bruk,jarrett,kronz,butter,norsen09}).  Since local causality is equivalent to the  combination of parameter independence  and the outcome independence condition $p(a,b|x,y,\lambda)=p(a|x,y,\lambda)p(b|x,y,\lambda)$~\cite{jarrett,shimony}, where the former is generally acknowledged to be an unambiguous locality concept (see Section~\ref{sec:pi}), this debate essentially boils down to whether the same can be said of outcome independence. While the debate is outside the scope of this paper, it is of interest to note here that outcome independence is far stronger than parameter independence with respect to the main focus of this paper, i.e., to  the derivation of determinism via perfect correlations. In particular, in contrast to the no--go results in Propositions~\ref{prop1} and \ref{prop3}, it may be shown that outcome independence and perfect correlations, by themselves, are sufficient to derive determinism~\cite{fraassen,halloutcome}.

\begin{acknowledgements}
I am grateful to Marek \.{Z}ukowski for pointing out a flaw in the original statement of Proposition~\ref{prop4} and for motivating Section~\ref{sec:free}, and to the anonymous referees for useful suggestions.
\end{acknowledgements}



\begin{thebibliography}{}
%
%

\bibitem{fraassen} B. C. van Fraassen, The Charybdis of realism: epistemological implications of Bell's inequality,  Synthese {\bf 52}, 25--38 (1982)


\bibitem{bell1964} J. S. Bell, On the Einstein-Podolsky-Rosen paradox,  Physics {\bf 1}, 195--200 (1964)

\bibitem{bellreview} N. Brunner, D. Cavalcanti, S. Pironio, V. Scarani, and S. Wehner, Bell nonlocality, Rev. Mod. Phys. {\bf 86}, 419--478 (2014)



\bibitem{wigner} E.P. Wigner, On hidden variables and quantum mechanical probabilities, Am. J. Phys. {\bf 38}, 1005--1009 (1970)


\bibitem{suppes} P. Suppes and M. Zanotti, On the determinism of hidden variable theories with strict correlation and conditional statistical independence of observables, in: P. Suppes (ed.), {\it Logic and Probability in Quantum Mechanics}, pp. 445--455, D. Reidel Publishing Company, Dordrecht Holland (1976) 

\bibitem{dem} W. Demopoulos, Boolean representations of physical magnitudes and locality, Synthese {\bf 42}, 101--119 (1979) 

\bibitem{bell1981} J.S. Bell, Bertlmann's socks and the nature of reality, Journal de Physique Colloques {\bf 42}, 41--62 (1981)


\bibitem{wutterich1} G. Grasshof, S. Portmann and A. W\"uthrich, Minimal assumption derivation of a Bell-type inequality, Brit. J. Phil. Sci. {\bf 56}, 663--680 (2005)

\bibitem{norsen2006} T. Norsen, Bell locality and the nonlocal character of nature, Found. Phys. Lett. {\bf 19}, 633--655 (2006)

\bibitem{blaylock} G. Blaylock, The EPR paradox, Bell's inequality, and the question of locality  Am. J. Phys.  {\bf 78}, 111--120 (2010)

\bibitem{maudlin} T. Maudlin, What Bell proved: a reply to Blaylock, Am. J. Phys.  {\bf 78}, 121--125 (2010)

\bibitem{ghirardi} G.C. Ghirardi, On a recent proof of nonlocality without inequalities, Found. Phys. {\bf 41}, 1309--1317 (2011)

\bibitem{griffi} R.B. Griffiths, EPR, Bell, and quantum locality, Am. J. Phys. {\bf 79}, 954--965 (2011)

\bibitem{wutterich2} A. W\"uthrich, Local acausality, Found. Phys. {\bf 44}, 594--609 (2014)

\bibitem{wiseman} H.M. Wiseman, The two Bell's theorems of John Bell, J. Phys. A {\bf 47}, 424001 (2014)

\bibitem{what} T. Maudlin, What Bell did, J. Phys. A {\bf 47}, 424010 (2014)

\bibitem{werner} R.F. Werner, Comment on `What Bell did', J. Phys. A {\bf 47}, 424011 (2014)

\bibitem{maudlinreply} T. Maudlin, Reply to Comment on `What Bell did', J. Phys. A {\bf 47}, 424012 (2014)

\bibitem{wernerreply} R.F. Werner, What Maudlin replied to, Eprint \href{https://arxiv.org/abs/1411.2120}{arXiv:1411.2120} (2014)

\bibitem{norsenpol} T. Norsen, Are there really two different Bell's theorems?, Int. J. Quantum Found. {\bf 1}, 65--84 (2015)

\bibitem{wisereply} H.M. Wiseman and E.G. Rieffel, Reply to Norsen's paper ``Are there really two different Bell's theorems?'', Int. J. Quantum Found. {\bf 1}, 85--99 (2015)

\bibitem{gillis} E.J.  Gillis, On the Analysis of Bell's 1964 Paper by Wiseman, Cavalcanti, and Rieffel, Int. J. Quantum Found. {\bf 1}, 199--214 (2015)

\bibitem{rieffel} H.M. Wiseman, E.G. Rieffel, and E.C.G. Cavalcanti, Reply to Gillis's ``On the Analysis of Bell’s 1964 Paper by Wiseman, Cavalcanti, and Rieffel'',  Int. J. Quantum Found. {\bf 2}, 143--154 (2016)


\bibitem{timpson} H.R. Brown and C.G. Timpson, Bell on Bell's theorem: the changing face of nonlocality, in:  M. Bell and S. Gao (eds), {\it Quantum Nonlocality and Reality---50 Years of Bell's Theorem}, pp. 91--123, Cambridge University Press (2016) 


\bibitem{tumulka} R. Tumulka, The assumptions of Bell's proof, in:  M. Bell and S. Gao (eds), {\it Quantum Nonlocality and Reality---50 Years of Bell's Theorem}, pp. 79--90, Cambridge University Press (2016)

\bibitem{norsen} T. Norsen, Quantum solipsism and non-locality, in: M. Bell and S. Gao (eds), {\it Quantum Nonlocality and Reality---50 Years of Bell's Theorem}, pp. 204--237, Cambridge University Press (2016)

\bibitem{caval} H. M. Wiseman and E. C. Cavalcanti, {Causarum investigatio} and the two Bell’s theorems of John Bell, in: R. Bertlmann and A. Zeilinger (eds), {\it Quantum [Un]Speakables II}, pp. 119--142, Springer (2017)

\bibitem{shim83} A. Shimony, Controllable and uncontrollable non-locality, in: S. Kamefuchi {\it et al.} (eds.), {\it Foundations of Quantum Mechanics in the Light of New Technology}, pp. 225--230, Physical Society of Japan, Tokyo Japan (1984)



\bibitem{many} D. Albert and B. Loewer, Interpreting the many-worlds interpretation, Synthese {\bf 77}, 195--213 (1988)

\bibitem{muynck} W.M. de Muynck, W. De Baere and H. Martens, Interpretations of quantum mechanics, joint measurement of incompatible observables, and counterfactual definiteness, Found. Phys. {\bf 24}, 1589--1664 (1994)

\bibitem{mermin} N.D. Mermin, What do these correlations know about reality? Nonlocality and the absurd, 
Found. Phys. {\bf 29}, 571--587 (1999)

\bibitem{kent} A. Kent, Locality and reality revisited, in: T. Placek and J. Butterfield (eds.) {\it Quantum Locality and Modality}, pp. 163-171. Kluwer, Dordrecht (2002)

\bibitem{gisin} N. Gisin, Non-realism: deep thought or a soft option?,Found. Phys. {\bf 42}, 80--85 (2012)

\bibitem{hallcom} M.J.W. Hall, Comment on `Non-realism: deep thought or a
soft option?' by N. Gisin, Eprint \href{https://arxiv.org/abs/0909.0015}{arXiv:0909.0015}

\bibitem{bruk} M. \.{Z}ukowski and C. Brukner, Quantum nonlocality---it ain't necessarily so\dots, J. Phys. A {\bf 47},
424009 (2014)

\bibitem{qbis} C.A. Fuchs, N.D. Mermin and R. Schack, An introduction to QBism with an application to the locality of quantum mechanics, Am. J. Phys. {\bf 82}, 749--754 (2014)

\bibitem{griff} R.B. Griffiths, Nonlocality claims are inconsistent with Hilbert space quantum mechanics, Phys. Rev. A {\bf 101}, 022117 (2020)


\bibitem{epr} A. Einstein, B. Podolsky and N. Rosen, Can quantum-mechanical description of physical reality be considered complete?, Phys. Rev. {\bf 47}, 777--780 (1935)


\bibitem{jarrett} J. Jarrett, On the physical significance of the locality conditions in the Bell
arguments, No\^us {\bf 18}, 569--589 (1984)



\bibitem{shimony} A. Shimony, Events and processes in the quantum world, in: R. Penrose and C.J. Isham (ed.), {\it Quantum Concepts in Space and Time}, pp. 182--203, Clarendon Press, Oxford UK (1986)	


\bibitem{rastall} P. Rastall, Locality, Bell's theorem and quantum mechanics, Found. Phys. {\bf 15}, 963--972 (1985)

\bibitem{hallbrans}  M. J. W. Hall, The significance of measurement independence
for Bell inequalities and locality, in {\it At the Frontier of Spacetime},
edited by T. Asselmeyer-Maluga (Springer, Cham, 2016),
Chap. 11


\bibitem{bohr} N. Bohr, Can quantum-mechanical description of physical reality be considered complete?, Phys. Rev. {\bf 48}, 696--702 (1935)

\bibitem{bellintro} J.S. Bell, Introduction to the hidden-variable question, in: J.S. Bell, {\it Speakable and Unspeakable in Quantum Mechanics}, pp. 29--39, Cambridge University Press, Cambridge UK (1987)

\bibitem{bellepr} J.S. Bell, Einstein-Podolsky-Rosen experiments, in: J.S. Bell, {\it Speakable and Unspeakable in Quantum Mechanics}, pp. 81--92, Cambridge University Press, Cambridge UK (1987)



\bibitem{shimreview} J.F. Clauser and A. Shimony, Bell's theorem: Experimental tests and implications, Rep. Prog. Phys. {\bf 41} 1881--1927 (1978)

\bibitem{bart} S.D. Bartlett, T. Rudolph and R.W. Spekkens, Reconstruction of Gaussian quantum mechanics from Liouville mechanics with an epistemic restriction, Phys. Rev. A {\bf 86}, 012103 (2012)

\bibitem{redhead} P. Heywood and M.L.G. Redhead, Nonlocality and the Kochen-Specker paradox, Found. Phys. {\bf 13}, 481--499 (1983)

\bibitem{ghzshim} D.M. Greenberger, M.A. Horne, A. Shimony and A. Zeilinger, Bell's theorem without inequalities, Am. J. Phys. {\bf 58}, 1131--1143 (1990)

\bibitem{mermin1990} N.D. Mermin, Simple unified form for the major no--hidden--variables theorems, Phys. Rev. Lett. {\bf 65}, 3373--3376 (1990)

\bibitem{conway} J. Conway and S. Kochen, The free will theorem, Found. Phys. {\bf 36}, 1441--1473 (2006)

\bibitem{putz1} G. P\"utz,  D. Rosset, T.J. Barnea, Y.-C. Liang and N. Gisin, Arbitrarily small amount of measurement independence is sufficient to manifest quantum nonlocality, Phys. Rev. Lett. {\bf 113}, 190402 (2014)

\bibitem{putz2} D. Aktas, S. Tanzilli, A. Martin, G. P\"utz, R. Thew and N. Gisin, Demonstration of quantum nonlocality in the presence of measurement dependence
Phys. Rev. Lett. {\bf 114}, 220404 (2015)

\bibitem{putz3} G. P\"utz and N. Gisin, Measurement dependent locality,  New J. Phys. {\bf 18}, 055006 (2016)

\bibitem{brunner2023} Ivan \v{S}upi\'c, J.-D. Bancal and N. Brunner, Quantum nonlocality in the presence of strong measurement dependence,
Phys. Rev. A {\bf 108}, 042207 (2023)


\bibitem{chsh} J. F. Clauser, M. A. Horne, A. Shimony, and R. A. Holt, Proposed experiment to test local hidden--variable theories, Phys. Rev. Lett. {\bf 23}, 880--884 (1969)

\bibitem{relaxed} M. J. W. Hall, Relaxed Bell inequalities and Kochen-Specker theorems, Phys. Rev. A {\bf 84}, 022102 (2011)


\bibitem{landsman} K. Landsman, Indeterminism and undecidability, in: A. Aguirre, Z. Merali and D. Sloan (eds.), {\it Undecidability, Uncomputability and Unpredictability}, pp. 17--45, Springer Cham, Switzerland (2021)


\bibitem{reid} M. D. Reid, P. D. Drummond, W. P. Bowen, E. G. Cavalcanti, P. K. Lam, H. A. Bachor, U. L. Andersen and G. Leuchs, The Einstein-Podolsky-Rosen paradox: from concepts to applications, Rev. Mod. Phys. {\bf 81}, 1727--1751 (2009)

\bibitem{steering} R. Uola, A. C. S. Costa, H. C. Nguyen and O. G\"uhne, Quantum steering,
Rev. Mod. Phys. {\bf 92}, 015001 (2020)




%
%
%
%
%
%
%
%
%
%
%
%
%
%
%
%
%
%


\bibitem{kronz} F. M. Kronz, Hidden locality, conspiracy and superluminal signals, Phil. Sci. {\bf 57}, 420--444 (1990)

\bibitem{butter} J. Butterfield, ``Bell's Theorem: What It Takes'', Brit. J. Phil. Sci. {\bf 43}, 41--83 (1992)

\bibitem{norsen09} T. Norsen, ``Local Causality and Completeness: Bell vs Jarrett'',  Found. Phys. {\bf 39}, 273--294     (2009)

\bibitem{halloutcome} M. J. W. Hall, Eprint \href{https://arxiv.org/abs/2009.14223v1}{arXiv:2009.14223v1}





\end{thebibliography}
\end{document}